\newcommand{\ket}[1]{\left| #1 \right>} 
\newcommand{\bra}[1]{\left< #1 \right|} 
\newcommand{\braket}[2]{\left< #1 \vphantom{#2} \right|
 \left. #2 \vphantom{#1} \right>} 
\newcommand{\redout}[1]{\textcolor{black}{#1}}
\newtheorem{definition}{Definition}
\newtheorem{theorem}{Theorem}
\newtheorem{lemma}[theorem]{Lemma}
\newtheorem{corollary}[theorem]{Corollary}
\title{Quantum Walks on the Line\\ with Phase Parameters}
\author{Marcos Villagra\thanks{marcos.villagra@acm.org, Graduate School of Information Science, Nara Institute of Science and Technology, Nara 630-0192, Japan.} \and Masaki Nakanishi\thanks{m-naka@e.yamagata-u.ac.jp, Faculty of Education, Art and Science, Yamagata University, Yamagata 990-8560, Japan.} \and Shigeru Yamashita\thanks{ger@cs.ritsumei.ac.jp, Department of Computer Science, Ritsumeikan
University, Shiga 525-8577, Japan.} \and Yasuhiko Nakashima\thanks{nakashim@is.naist.jp, Graduate School of Information Science, Nara Institute of Science and Technology, Nara 630-0192, Japan.}}
\date{}
\begin{document}
\maketitle
\begin{abstract}
In this paper, a study on discrete-time coined quantum walks on the line is presented. Clear mathematical foundations are still lacking for this quantum walk model. As a step \redout{towards} this objective, the following question is being addressed: {\it Given a graph, what is the probability that a quantum walk arrives at a given vertex after some number of steps?} This is a very natural question, and for random walks it can be answered by several different combinatorial arguments. For quantum walks this is a highly non-trivial task. Furthermore, this was only achieved before for one specific coin operator (Hadamard operator) for walks on the line. Even considering only walks on lines, generalizing these computations to a general $SU(2)$ coin operator is a complex task. The main contribution is a closed-form formula for the amplitudes of the state of the walk (which includes the question above) for a general symmetric $SU(2)$ operator for walks on the line. To this end, a coin operator with parameters that alters the phase of the state of the walk is defined. Then, closed-form solutions are computed by means of Fourier analysis and asymptotic approximation methods. We also present some basic properties of the walk which can be deducted using weak convergence theorems for quantum walks. In particular, the support of the induced probability distribution of the walk is calculated. Then, it is shown how changing the parameters in the coin operator affects the resulting probability distribution.
\end{abstract}

\noindent{\bf Keywords:} quantum computation, random walks, quantum walks, asymptotic approximation

\section{Introduction}
The design of quantum algorithms is nowadays one of the major problems in the quantum computing community. The strategies for writing classical algorithms as divide and conquer, dynamic programming, etc, are not easily adapted to the quantum paradigm. Strategies for designing quantum algorithms are phase amplification, phase estimation, to name a few. As an example of the applications of these strategies, Grover's algorithm uses the amplitude amplification technique, and Shor's algorithm relies in reductions to order finding and phase estimation \cite{nielsen00}. Therefore, it becomes necessary the study of different approaches to improve the efficiency of the search.

One of the emergent alternatives for the design of algorithms are quantum walks, in direct analogy to random walks in classical computing. Random walks showed to be a successful tool for designing algorithms, and the same success is expected in the quantum paradigm. Results in this field showed that quantum walks can outperform its classical counterpart by exploiting quantum mechanical effects such as interference and superposition, giving and exponential speedup for certain types of graphs, and polynomial speedup for some practical applications \cite{ambainis04,kempe03}.

There are two types of discrete-time quantum walks, {\it Quantum Markov Chains} and {\it Coined Quantum Walks}. This paper is about the latter, and from now on we will refer to this model simply as quantum walk when is obvious from the context.

The field of quantum walks is very recent, and still lacks a solid mathematical foundation. Markov chain quantum walks already started to build these foundations by establishing a direct connection to classical Markov chains using algebraic techniques \cite{santha08}. However, coined quantum walks are not having the same luck, and it seems that mathematical techniques for random walks simply do not work.

\subsection{Related Work}
Coined Quantum Walks are defined by the application of two unitary operators ${S}$ and ${C}$, where ${C}$ (coin operator) decides which vertex to move onto, and ${S}$ (shift operator) performs the actual movement of the walk given the direction decided by $C$. Ambainis \cite{ambainis04}, Kempe \cite{kempe03} and Konno \cite{Konno2008} give good surveys of this model. There are several studies of this walk for specific graphs. On the line, Ambainis et al. \cite{ambainis01} and Chandrashekar,  Srikanth, and Laflamme \cite{chandrashekar08} show that the variance of the induced probability distribution has a quadratic improvement over the classical walk (i.e. for $t$ steps, $V=O(t^2)$ and classically $V=O(t)$). Konno computed the induced probability distribution using path integrals \cite{Konno2002} and via a weak limit theorem \cite{Konno2005}. In the hypercube, Kempe \cite{kempe05} shows that the hitting time from one corner to its opposite is exponentially faster, while Moore and Russell \cite{moore02} gives the same speed-up for the mixing time. For practical applications there are algorithms for hypercubes and grids. For the hypercube, Shenvi et al. \cite{shenvi03} gives an algorithm for solving SAT with a quadratic improvement, while Poto\v cek et al. \cite{potocek09} gives an improvement of the same algorithm on the success probability. For grids, Ambainis et al. \cite{ambainis05} show a quadratic speed-up and presents a general framework for analyzing quantum walks. Also, Ambainis \cite{ambainis07} gives an optimal algorithm for element distinctness over the Jhonson graph with a quadratic speed-up.

Quantum walks on the line is probably the most studied quantum walk model. Interest on this matter started in computer science with Ambainis, Bach, Nayak, Vishwanath, and Watrous \cite{ambainis01}, where notions of hitting and mixing times were introduced. In the same piece of work, they computed a closed-form formula for the induced probability distribution of a Hadamard walk (i.e. a quantum walk with a Hadamard operator as coin). Furthermore, their formula gives a complete characterization of the amplitudes in the state of the walk in the asymptotic limit.


It is known that the dynamics of the walk is controlled by the coin operator \cite{kempe03}. Thus, depending on the application, a good choice of the coin could make a great difference. This motivated the study of quantum walks on the line moved by a general $SU(2)$ operator, which has four independent variables. However if we consider only the resulting probability distribution, one variable is enough; i.e. any probability distribution resulting from a quantum walk on the line can be simulated by a general rotation around the $z$ axis with parameter $\theta$. Nayak and Vishwanath \cite{Nayak2000} gave an intuitive description of the probability distribution based on the stationary phase method without giving an explicit formula for it, and without considering the amplitudes of the state of the walk. Chandrashekar, Srikanth, and Laflamme \cite{chandrashekar08} studied generalized walks using a $SU(2)$ coin operation. They present an approximate formula for the amplitudes of the state of the walk. However, their results were based in numerical experiments rather than a complete analytically deducted formula. Grimmet, Janson, and Scudo \cite{grimmett04} showed a ballistic spreading of the walk and they gave an expression for the limit distribution using weak convergence theorems.

\subsection{Contributions}

As a step toward finding mathematical foundations of quantum walks, in this paper the following question is being addressed: {\it Given a graph, what is the probability that a quantum walk arrives at a given vertex after some number steps?} This is a very natural question, and for random walks it can be answered by several different combinatorial arguments \cite{stroock05}.


The main contribution of this paper is a closed-form formula\footnote{A quantity $f(n)$ is in closed-form if we can compute it using at most a fixed number of ``well-known" standard operations, independent of $n$ \cite{Graham1994}.} for the question above for a general symmetric $SU(2)$ operator for walks on the line (Theorem \ref{the:amplitudes}). Furthermore, the formula characterizes the amplitudes of the state of the walk in the asymptotic limit. In comparison to the previous works mentioned before (Nayak and Vishwanath \cite{Nayak2000}, Chandrashekar et al. \cite{chandrashekar08}), the closed-form formulas derived in this paper were analytically computed for the amplitudes of the state of the walk (including the induced probability distribution) for a symmetric $SU(2)$ operator (Table \ref{tab:known-results} shows more clearly these differences). Also, in a seminal work, Konno \cite{Konno2002,Konno2005} gave explicit expressions for the amplitudes of a $U(2)$ coin, using a discrete path integral method in a clever way. However, these expressions were not in closed-form, as we claim in this work. Furthermore, we show how to compute the errors in the asymptotic approximation, something that was missing from previous works in the literature. To this end, in Section \ref{sec:phase-parameters} a coin operator with parameters that alters the phase of the state of the walk on the line is proposed. The coin operator is inspired by the quantum algorithm for SAT proposed by Hogg \cite{Hogg2000}.  In that work, in order to implement heuristics for quantum algorithms, the author proposed to add parameters to the unitary operation of a search algorithm. This way, the situation is similar to classical algorithms where a tunable set of parameters are adjusted according to the problem. After defining the coin operation, we compute the  spectrum of the unitary evolution operator of the walk using Fourier analysis. In Section \ref{sec:approximation}, after having obtained the eigenspectrum of the walk,  we apply the inverse Fourier transform to obtain the state of the walk in terms of Fourier coefficients. To compute a closed-form solution in the asymptotic limit from the Fourier coefficients, we applied the Euler-Maclaurin formula \cite{Apostol1999} and the steepest descent method for asymptotic approximation of integrals \cite{wong01}. This method is in fact stronger than the stationary point method from \cite{Nayak2000} and \cite{Stefanak2008}, where the authors use it to study the asymptotics of the resulting probability distribution from coin operators with real eigenvalues. With the steepest descent method we can compute the amplitudes of the state of the walk resulting from any complex unitary operator.  In Section \ref{sec:formulas-convergence}, we compute the error terms for the approximations made, which can be derived from the employed methods. Finally, some basic properties of the walk are examined by means of weak convergence theorems \cite{grimmett04}. The support of the induced probability distribution of the walk is computed, and then we argue how changing the parameters in the coin operator affects the resulting  probability distribution. 

\begin{table}[t]
	\centering
	\caption{Known results for different coins for walks on the line.}
	\label{tab:known-results}
	\begin{tabular}{|p{2cm}|p{4cm}|p{4cm}|}
		\hline
		{\bf Coin} 		& {\bf Amplitudes of the state}				& {\bf Probability distribution}\\
		\hline
		\hline
		Hadamard			& closed-form \cite{Nayak2000}		& closed-form \cite{Nayak2000,Konno2002}\\
		\hline
		$SU(2)$			& numerical results \cite{chandrashekar08}	& numerical results \cite{chandrashekar08}, closed-form \cite{Nayak2000}\\
		\hline
		Symmetric $SU(2)$	& closed-form [this work]				& closed-form [this work]\\
		\hline
		$U(2)$			& explicit formula (not closed-form) \cite{Konno2002,Konno2005}		& explicit formula (not closed-form) \cite{Konno2005}\\
		\hline
	\end{tabular}
\end{table}

\section{Quantum Walks with Phase Parameters}\label{sec:phase-parameters}
In this section, a coin operation with parameters is proposed. Then, using Fourier analysis, integral-forms for the amplitudes of the walk on the line are computed. Later in Section \ref{sec:approximation}, it is shown how to solve these integrals and derive a closed-form in the asymptotic limit.

\subsection{Walks on the Line with Phase Parameters}
Here we define quantum walks on the line, and introduce the coin operator used in this research.

\begin{definition}
Let $\mathcal{H}_c=span\{\ket{\gets},\ket{\to}\}$ and $\mathcal{H}_s=span\{\ket{n}:n\in\mathbb{Z}\}$. The state of the walk  $|\Psi_t\rangle=\sum_n|\psi_t(n)\rangle$ at time $t$ is defined over the joint space $\mathcal{H}_c \otimes \mathcal{H}_s$ with basis states $\{|d,n\rangle : |d\rangle \in \mathcal{H}_c, |n\rangle \in\mathcal{H}_s\}$, where $|\psi_t(n)\rangle=\sum_{d}\alpha_t^d(n)|d,n\rangle$ and $\alpha_t^d(n)$ is the amplitude at time $t$ in direction $d$ and position $n$. Also $\sum_{d,n}|\alpha_t^d(n)|^2=1$.
\end{definition}

For the analysis of the walk on the line we consider the projection at time $t$ onto position $n$ as a 2 dimensional vector, i.e.
\[
\begin{bmatrix}
\alpha_t^\gets(n)\\
\alpha_t^\to(n)
\end{bmatrix}
\]
with $\alpha_t^\gets(n)$ and $\alpha_t^\to(n)$ representing the amplitude of the walker at position $n$ at time $t$ going left and right respectively. The probability of being at position $n$ at time $t$ is thus given by
\begin{equation}\label{eq:prob}
P_t(n)=|\langle \psi_t(n)|\psi_t(n)\rangle|^2=|\alpha_t^\gets(n)|^2+|\alpha_t^\to(n)|^2.
\end{equation}
Throughout the paper, the initial condition is considered as $|\psi_0(0)\rangle=[\alpha_0^\gets,\alpha_0^\to]^T$ and $|\psi_0(n)\rangle=[0,0]^T$ for $n\neq 0$, with $|\alpha_0^\gets|^2+|\alpha_0^\to|^2=1$.

The quantum walk is defined by the way it moves at each time step. This is captured by the following definition.

\begin{definition}
The time evolution of the walk is given by
\[ \ket{\Psi_t}=U\ket{\Psi_{t-1}},\text{ or equivalently, } \ket{\Psi_t}=U^t\ket{\Psi_0},\]
where $U=S(C\otimes I)$ is a unitary operator defined on the Hilbert space of the whole system $\mathcal{H}_c\otimes \mathcal{H}_s$, $I$ is the identity matrix acting on $\mathcal{H}_s$, $C$ is the coin operator acting solely on $\mathcal{H}_c$, and $S$ is the shift operator in charge of performing the walk.
\end{definition}

According to this definition, the walk first choses a direction of movement using $C$, and then moves with operator $S$. In order to move, operator $S$ needs to be conditioned on the coin space in the following way,
\begin{equation}
S=\sum_{n} \ket{\gets}\bra{\gets} \otimes \ket{n-1}\bra{n}+\ket{\to}\bra{\to} \otimes \ket{n+1}\bra{n}.
\end{equation}

\begin{definition}\label{def:coin}
The coin operator is defined by ${C}={H}{T}{H}$, where ${H}$ is the Hadamard operator\footnote{The Hadamard operator is defined as $H=\frac{1}{\sqrt{2}}\begin{bmatrix} 1 & 1\\ 1 & -1\end{bmatrix}$.} in charge of mixing amplitudes among states, and ${T}=e^{i\pi\tau_1}|\gets\rangle\langle \gets|+e^{i\pi\tau_2}|\to\rangle\langle \to|$ is the diagonal phase adjustments with $\tau_1,\tau_2\in[0,1]$.
\end{definition}

Let $a\equiv e^{i\pi\tau_1}+e^{i\pi\tau_2}$ and $b\equiv e^{i\pi\tau_1}-e^{i\pi\tau_2}$. Then, the resulting operator can be written as
\[
C=\frac{1}{2}\begin{bmatrix}
a	&	b\\
b	&	a
\end{bmatrix},
\]
which have the following effect on $\mathcal{H}_C$
\[\begin{array}{l}
| \gets\rangle \longrightarrow (1/2)a| \gets\rangle+(1/2)b| \to\rangle,\\
|\to\rangle \longrightarrow (1/2)b| \gets\rangle+(1/2)a| \to\rangle.
\end{array}\]

Figure \ref{fig:walk} shows the dynamics of a walk using ${C}$ as coin.
\begin{figure*}[t]
\center
\includegraphics[scale=0.7]{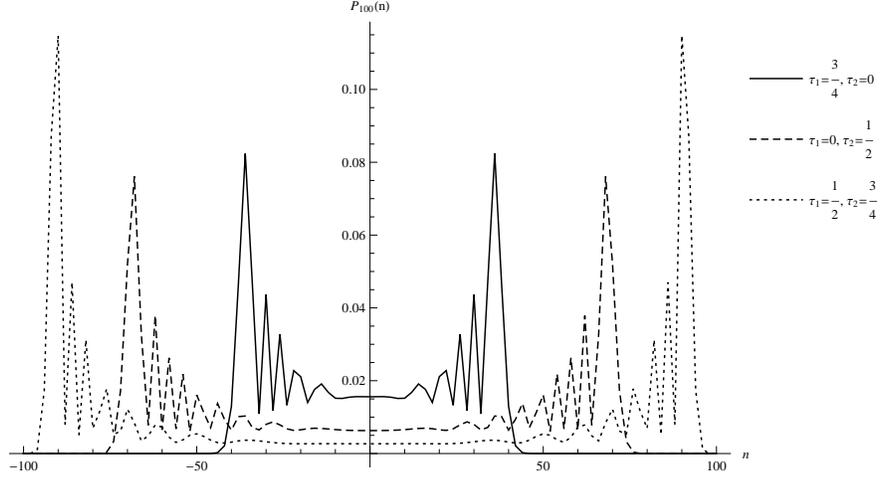}
\caption{Quantum walk on the line with different values of phase parameters. The variance of the walk changes depending on $\tau_1$ and $\tau_2$. Since the probabilities at odd positions are 0, those points are not plotted.}
\label{fig:walk}
\end{figure*}
For different values of the phase parameters $\tau_1$ and $\tau_2$ the variance of the induced probability distribution changes. 

The state of the walk at time $t$ can be related to the state at time $t+1$ according to the following lemma.
\begin{lemma}
\begin{equation}\label{eq:evolution}
|\psi_{t+1}(n)\rangle=M_+|\psi_t(n-1)\rangle+M_-|\psi_t(n+1)\rangle
\end{equation}
where
\[
M_+=\begin{bmatrix}
0		&	0\\
(1/2)b	&	(1/2)a
\end{bmatrix}
\text{ and }
M_-=\begin{bmatrix}
(1/2)a	&	(1/2)b\\
0		&	0
\end{bmatrix}.
\]
\end{lemma}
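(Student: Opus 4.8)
The plan is to derive the recurrence directly from the definition $\ket{\Psi_{t+1}}=U\ket{\Psi_t}=S(C\otimes I)\ket{\Psi_t}$, by tracking how each basis amplitude is transformed and then reindexing the sum over positions. First I would expand $\ket{\Psi_t}=\sum_n\ket{\psi_t(n)}=\sum_n[\alpha_t^\gets(n)\ket{\gets,n}+\alpha_t^\to(n)\ket{\to,n}]$ and apply $C\otimes I$ to each position-$n$ slice, using the action of $C$ recorded above, namely $\ket{\gets}\mapsto(1/2)a\ket{\gets}+(1/2)b\ket{\to}$ and $\ket{\to}\mapsto(1/2)b\ket{\gets}+(1/2)a\ket{\to}$. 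This produces, at each $n$, a left-moving coefficient $(1/2)a\,\alpha_t^\gets(n)+(1/2)b\,\alpha_t^\to(n)$ attached to $\ket{\gets,n}$ and a right-moving coefficient $(1/2)b\,\alpha_t^\gets(n)+(1/2)a\,\alpha_t^\to(n)$ attached to $\ket{\to,n}$.

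Next I would apply the shift operator $S$, which sends $\ket{\gets,n}\mapsto\ket{\gets,n-1}$ and $\ket{\to,n}\mapsto\ket{\to,n+1}$. The key step is then to collect, for a fixed target position $m$, all terms of $\ket{\Psi_{t+1}}$ that land on $\ket{\gets,m}$ and on $\ket{\to,m}$. Since $S$ moves the left component from $n$ to $n-1$, the $\ket{\gets,m}$ contribution comes from $n=m+1$; since it moves the right component from $n$ to $n+1$, the $\ket{\to,m}$ contribution comes from $n=m-1$. Reading off the surviving coefficients yields the two scalar identities $\alpha_{t+1}^\gets(m)=(1/2)a\,\alpha_t^\gets(m+1)+(1/2)b\,\alpha_t^\to(m+1)$ and $\alpha_{t+1}^\to(m)=(1/2)b\,\alpha_t^\gets(m-1)+(1/2)a\,\alpha_t^\to(m-1)$.

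Finally I would repackage these two scalar equations into the claimed vector form. The $m+1$ terms contribute only to $\alpha_{t+1}^\gets(m)$ and feed the top row, matching $M_-\ket{\psi_t(m+1)}$ with $M_-$ having nonzero first row $[(1/2)a,\ (1/2)b]$; the $m-1$ terms contribute only to $\alpha_{t+1}^\to(m)$ and feed the bottom row, matching $M_+\ket{\psi_t(m-1)}$ with $M_+$ having nonzero second row $[(1/2)b,\ (1/2)a]$. Summing the two contributions and renaming $m$ as $n$ then gives \eqref{eq:evolution}.

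The only delicate point is the reindexing in the second step: because $S$ acts on the position label while the recurrence is phrased from the viewpoint of the target position $m$, the directions of the index shifts invert, so that the left-moving amplitude arriving at $m$ originates one site to the \emph{right} and the right-moving amplitude arriving at $m$ originates one site to the \emph{left}. Getting these two offsets correct — and hence the placement of $M_+$ against $n-1$ and $M_-$ against $n+1$ rather than the reverse — is where an error would most easily creep in; everything else is a direct substitution of the coin action followed by a rearrangement into matrix form.
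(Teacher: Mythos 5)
Your proposal is correct and follows essentially the same route as the paper's own proof: apply $C\otimes I$ slice by slice, then use the shift $S$ to trace contributions at a target position back to the neighbors $m+1$ (left component) and $m-1$ (right component), and repackage the two scalar recurrences as $M_-\ket{\psi_t(n+1)}+M_+\ket{\psi_t(n-1)}$. Your explicit attention to the inversion of the index shifts is exactly the point the paper also emphasizes (``the contributions \ldots come from position $n+1$ for the upper component, and from $n-1$ for the lower component by definition of operator $S$''), and your matrices agree with the stated $M_+$ and $M_-$.
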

\begin{proof}
Let $|\Psi_t\rangle=\sum_n \alpha_t^\gets(n)|\gets,n\rangle+\alpha_t^\to(n)|\to,n\rangle$ be the state at time $t$. Also denote the amplitudes after applying operators $C$ and $S$ as
\begin{align*}
	(C\otimes I)\ket{\Psi_t}	&=\sum_n \alpha_t^\gets(n)'|\gets,n\rangle+\alpha_t^\to(n)'|\to,n\rangle,\\
	S(C\otimes I)|\Psi_t\rangle&=\sum_n \alpha_t^\gets(n)''|\gets,n\rangle+\alpha_t^\to(n)''|\to,n\rangle.
\end{align*}
Now let $|\Psi_{t+1}\rangle=\begin{bmatrix} \alpha_{t+1}^\gets(n)\\ \alpha_{t+1}^\to(n) \end{bmatrix}$
be the state at time $t+1$. The amplitudes of this state are related to the amplitudes of $|\Psi_t\rangle$ in the following way
\[
\begin{bmatrix}
	\alpha_{t+1}^\gets(n)\\
	\alpha_{t+1}^\to(n)
\end{bmatrix}
=
\begin{bmatrix}
	\alpha_{t}^\gets(n)''\\
	\alpha_{t}^\to(n)''
\end{bmatrix}
=
\begin{bmatrix}
	\alpha_{t}^\gets(n+1)'\\
	\alpha_{t}^\to(n-1)'
\end{bmatrix}.
\]

The contributions to the amplitudes of state $|\Psi_{t+1}\rangle$ come from position $n+1$ for the upper component, and from $n-1$ for the lower component by definition of operator $S$. The amplitudes corresponding to the state after applying $C$ are computed as follows:
\begin{IEEEeqnarray}{rCl}
	\IEEEeqnarraymulticol{3}{l}{C|\psi_t(n+1)\rangle}\nonumber\\ \qquad
					&=&\begin{bmatrix}
						(1/2)a\alpha_t^\gets(n+1)+(1/2)b\alpha_t^\to(n+1)\\
						(1/2)b\alpha_t^\gets(n+1)+(1/2)a\alpha_t^\to(n+1)
					\end{bmatrix}\nonumber\\
					&=&\begin{bmatrix}
						\alpha_{t}^\gets(n+1)'\\
						\alpha_{t}^\to(n+1)'
					\end{bmatrix},\nonumber
\end{IEEEeqnarray}
and the same for $C\ket{\psi_t(n-1)}$. Thus
\begin{align*}
|\psi_{t+1}(n)\rangle	
			&=\begin{bmatrix}
				(1/2)a\alpha_t^\gets(n+1)+(1/2)b\alpha_t^\to(n+1)\\
				(1/2)b\alpha_t^\gets(n-1)+(1/2)a\alpha_t^\to(n-1)
			\end{bmatrix}\\
			&=M_+|\psi_t(n-1)\rangle+M_-|\psi_t(n+1)\rangle,
\end{align*}
where
\[
M_+=\begin{bmatrix}
0		&	0\\
(1/2)b	&	(1/2)a
\end{bmatrix}
\text{ and }
M_-=\begin{bmatrix}
(1/2)a	&	(1/2)b\\
0		&	0
\end{bmatrix}.
\]
\end{proof}

\subsection{Analysis}\label{sec:analysis}
One approach to the analysis of quantum processes is the path integral approach. This method explicitly computes the amplitude of a certain state as the sum over all possible paths leading to that state \cite{ambainis01,Konno2002}. Solving a path integral is known to be hard, and we avoid this by following the steps of \cite{ambainis01,moore02,kempe05} known as the {\it Schr\"odinger approach}. Given the translational invariance of the walk, it has a simple description in Fourier space \cite{ambainis01}. The Fourier transform of the walk is analyzed and then transformed back to the original domain.

The quantum Fourier transform \cite{nielsen00} of a wave equation is defined by
\begin{equation}\label{eq:fourier}
\ket{\widetilde{\psi}_t(k)}=\sum_n e^{ikn} \ket{\psi_t(n)},
\end{equation}
and the corresponding inverse Fourier transform is then
\begin{equation}\label{eq:inversefourier}
\ket{\psi_t(n)}=\frac{1}{2\pi}\int_{-\pi}^{\pi} e^{-ikn} \ket{\widetilde{\psi}_t(k)} dk.
\end{equation}

Applying (\ref{eq:fourier}) to (\ref{eq:evolution}) we get
\begin{align*}
\ket{\widetilde{\psi}_{t+1}(k)}	&=\sum_n e^{ikn}M_+|\psi_t(n-1)\rangle+e^{ikn}M_-|\psi_t(n+1)\rangle\\
						&=e^{ik}M_+\sum_n e^{ik(n-1)}\ket{\psi_t(n-1)}\\
						&\qquad+e^{-ik}M_-\sum_n e^{ik(n+1)}\ket{\psi_t(n+1)}\\
						&=e^{ik}M_+\ket{\widetilde{\psi}_t(k)}+e^{-ik}M_-\ket{\widetilde{\psi}_t(k)}\\
						&=\left(e^{ik}M_+ + e^{-ik}M_-\right)\ket{\widetilde{\psi}_t(k)}.
\end{align*}
Then, the time-evolution in Fourier space is given by
\begin{equation}
\ket{\widetilde{\psi}_{t+1}(k)}=M_k\ket{\widetilde{\psi}_t(k)}
\end{equation}
where ${M}_k=e^{ik}M_+ +e^{-ik}M_-$. In matrix form
\begin{equation}\label{eq:operator}
{M}_k=\frac{1}{2}\begin{bmatrix}
ae^{-ik}	&	be^{-ik}\\
be^{ik}	&	ae^{ik}
\end{bmatrix}.
\end{equation}

In general, the state at time $t$ is given by the $t$-th power of operator ${M}_k$ applied to the initial state
\begin{equation}\label{eq:fourier-time-evolution}
\ket{\widetilde{\psi}_t(k)}= {M}_k^t\ket{\widetilde{\psi}_0(k)}.
\end{equation}

The following lemma shows the eigenspectrum of operator $M_k$.
\begin{lemma}\label{the:spectrum}
Let $M_k$ be a unitary matrix as in (\ref{eq:operator}). The eigenvalues and eigenvectors of $M_k$ are
\[
\lambda_j(k)=1/2\left( a\cos k \pm \sqrt{b^2-a^2\sin^2k}\right)
\]
and
\[
|\lambda_j(k)\rangle=N_j(k)\begin{bmatrix}
-ia\sin k \pm \sqrt{b^2-a^2\sin^2k}\\
be^{ik}
\end{bmatrix}
\]
respectively, with $j=1,2$. Furthermore, $N_j(k)$ is a normalization coefficient given by
\[
N_j(k)=\left( \left|-ai\sin k \pm \sqrt{b^2-a^2\sin^2 k}\right|^2+\left| b\right|^2 \right)^{-1/2}.
\]
\end{lemma}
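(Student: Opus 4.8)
The plan is to diagonalize the $2\times 2$ matrix $M_k$ directly, since both the eigenvalues and eigenvectors are asserted explicitly. First I would form the characteristic polynomial $\det(M_k-\lambda I)=0$. Expanding the determinant of (\ref{eq:operator}) and using $e^{ik}e^{-ik}=1$ together with $e^{ik}+e^{-ik}=2\cos k$, the phase factors $e^{\pm ik}$ cancel in the product and combine in the trace, so I expect the clean quadratic
\[
\lambda^2-(a\cos k)\,\lambda+\tfrac{1}{4}(a^2-b^2)=0.
\]
The quadratic formula then gives $\lambda=\tfrac{1}{2}\big(a\cos k\pm\sqrt{a^2\cos^2 k-(a^2-b^2)}\big)$, and the only step beyond bookkeeping is rewriting the radicand via the Pythagorean identity, $a^2\cos^2 k-a^2+b^2=b^2-a^2(1-\cos^2 k)=b^2-a^2\sin^2 k$, which yields exactly the stated $\lambda_j(k)$.

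For the eigenvectors I would substitute each $\lambda_j$ into the homogeneous system $(M_k-\lambda_j I)v=0$ and solve for $v=(v_1,v_2)^T$ up to scale. Choosing the normalization $v_2=be^{ik}$ so as to match the second component in the claim, the second row of the system reduces to $v_1=2\lambda_j-ae^{ik}$. Substituting the eigenvalue and expanding $e^{ik}=\cos k+i\sin k$, the $a\cos k$ terms cancel and I expect to recover $v_1=-ia\sin k\pm\sqrt{b^2-a^2\sin^2 k}$, reproducing the claimed $|\lambda_j\rangle$. The normalization coefficient $N_j(k)$ is then just the reciprocal norm $1/\|(v_1,be^{ik})^T\|$, which is precisely the stated formula.

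Since everything reduces to a $2\times 2$ eigenvalue problem, the work is elementary and the main care is bookkeeping rather than conceptual: the two $\pm$ choices must be kept consistent between $\lambda_j$ and $|\lambda_j\rangle$ (the same branch of the square root throughout each case $j$). The only genuine subtleties are the exceptional parameter/momentum values where the discriminant $b^2-a^2\sin^2 k$ vanishes, so that the eigenvalues coincide and the two eigenvectors merge, and the trivial case $b=0$ (i.e.\ $\tau_1=\tau_2$), where $v_2=be^{ik}$ degenerates and one must instead scale by $v_1$. Both are non-generic and can be treated separately, and neither affects the Fourier-integral analysis of Section \ref{sec:approximation}; I would remark on them only briefly.
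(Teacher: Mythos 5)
Your proposal is correct and follows essentially the same route as the paper's proof: the same characteristic polynomial $\lambda^2-a\lambda\cos k+\tfrac{1}{4}(a^2-b^2)=0$, the same Pythagorean rewriting of the radicand, and the same eigenvector computation via the second row of $(M_k-\lambda_j I)v=0$ (the paper sets $y_j=1$ and then rescales by $be^{ik}$, which is identical to your choice $v_2=be^{ik}$). Your closing remarks on the degenerate cases $b=0$ and $b^2=a^2\sin^2k$ go slightly beyond the paper, which passes over them silently, but they do not alter the argument.
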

\begin{proof}
The characteristic polynomial of $M_k$ is determined by $det(M_k-\lambda I)=0$. Then
\begin{align*}
det(M_k-\lambda I)	
				&=\lambda^2-a \lambda \cos k+\frac{a^2}{4}-\frac{b^2}{4}.
\end{align*}
Solving the equation gives the eigenvalues
\begin{align*}
\redout{\lambda_j(k)}	
			&=\frac{a \cos k \pm \sqrt{b^2-a^2\sin^2 k}}{2},
\end{align*}
for $j=1,2$.
In order to find the eigenvectors, we solve the following system of linear equations
\[
\left(M_k-\redout{\lambda_j(k)} I\right)\begin{bmatrix}
x_j\\
y_j
\end{bmatrix}
=
\begin{bmatrix}
x_j\left(\frac{a}{2}e^{-ik}-\redout{\lambda_j(k)}\right)+y_j \frac{b}{2}e^{-ik}\\
x_j\frac{b}{2}e^{ik}+y_j\left(\frac{a}{2}e^{ik}-\redout{\lambda_j(k)}\right)
\end{bmatrix}
=\begin{bmatrix}
0\\
0
\end{bmatrix}.
\]
By letting $y_j=1$, we get $x_j=(-a+2\lambda_je^{-ik})/{b}$. Given that any multiple of this vector is still an eigenvector, multiply $y_j$ and $x_j$ by $be^{ik}$ and obtain
\[
be^{ik}\begin{bmatrix}
	x_j\\
	y_j
\end{bmatrix}
=
\begin{bmatrix}
-ae^{ik}+2\lambda_j\\
be^{ik}
\end{bmatrix}
=
\begin{bmatrix}
-ai\sin k \pm\sqrt{b^2-a^2\sin^2 k} \\
be^{ik}
\end{bmatrix}.
\]
Then $N_j(k)$ is 1 divided by the $\ell_2$-norm of this vector, and multiply the eigenvectors by $N_j(k)$ to normalize them.
\end{proof}

Diagonalize (\ref{eq:operator}) to obtain
\[
M_k^t = \sum_{j\in\{1,2\}}\lambda_j(k)^t|\lambda_j(k)\rangle\langle \lambda_j(k)|,
\]
where $\lambda_1(k)$ and $\lambda_2(k)$ are the eigenvalues with corresponding eigenvectors $\ket{\lambda_1(k)}$ and $\ket{\lambda_2(k)}$. Now apply the diagonalized operator to the time evolution (\ref{eq:fourier-time-evolution}) and obtain the following form
\begin{align}
|\widetilde{\psi}_t(k)\rangle
				&=\sum_j\left(\lambda_j(k)^t\ket{\lambda_j(k)}\bra{\lambda_j(k)}\right)\ket{\widetilde{\psi}_0(k)}\nonumber\\
				&=\sum_j\braket{\lambda_j(k)}{\widetilde{\psi}_0(k)}\lambda_j(k)^t\ket{\lambda_j(k)}.\label{eq:time-evolution}
\end{align}

The initial state is $[\alpha_0^\gets,\alpha_0^\to]^T$, and in Fourier space becomes $|\widetilde{\psi}_0(k)\rangle=[{\alpha}_0^\gets,{\alpha}_0^\to]^T$ for all $k \in[-\pi,\pi]$. To write equation (\ref{eq:time-evolution}) in a simpler way, define
\begin{align}
\xi_j(k)	&=\braket{\lambda_j(k)}{\widetilde{\psi}_0(k)}\nonumber\\
		&=\alpha_0^\gets N_j(k)\left(-ia\sin k \pm \sqrt{b^2-a^2\sin^2k}\right)^*+\alpha_0^\to N_j(k)b^*e^{-ik},
\end{align}
where $*$ is the complex conjugate. This can be expressed in matrix form as
\begin{align*}
\begin{bmatrix}
\xi_1(k)\\
\xi_2(k)
\end{bmatrix}
&=
\begin{bmatrix}
(-ia \sin k+\sqrt{b^2-a^2\sin^2k})^*	& b^* e^{-ik}N_1(k)\\
(-ia \sin k-\sqrt{b^2-a^2\sin^2k})^*	& b^* e^{-ik}N_2(k)
\end{bmatrix}
\cdot
\begin{bmatrix}
\alpha_0^\gets\\
\alpha_0^\to
\end{bmatrix}.
\end{align*}

The state of the walk at time $t$ can be expressed by
\begin{equation}\label{eq:fourier-evolution}
\ket{\widetilde{\psi}_t(k)}=M_k^t\ket{\widetilde{\psi}_0(k)}=\sum_j \lambda_j^t(k) \xi_j(k)\ket{\lambda_j(k)}.
\end{equation}
Let $\widetilde{\alpha}_t^\gets(k)$ and $\widetilde{\alpha}_t^\to(k)$ be the amplitudes of the state $\ket{\widetilde{\psi}_t(k)}$ in Fourier space going left and right respectively. Then, by equation (\ref{eq:fourier-evolution}) and Lemma \ref{the:spectrum} these amplitudes  are
\begin{align}
\widetilde{\alpha}_t^\gets(k)	&=\sum_j \lambda_j(k)^t \xi_j(k) N_j(k)
						\left(-ia\sin k \pm \sqrt{b^2-a^2\sin^2 k}\right)\label{eq:fourierleft}
\end{align}
and
\begin{equation}
\widetilde{\alpha}_t^\to(k)=\sum_j\lambda_j(k)^t \xi_j(k) N_j(k) be^{ik}.\label{eq:fourierright}
\end{equation}

The final step is to reverse back to the original domain of the walk. This is done by applying (\ref{eq:inversefourier}) to (\ref{eq:fourierleft}) and (\ref{eq:fourierright}),
\begin{align}
\alpha_t^\gets(n)	&= \frac{1}{2\pi}\int_{-\pi}^{\pi}  \sum_j  \xi_j(k) N_j(k)\lambda_j^te^{-ikn}
				 \left(-ia\sin k\pm \sqrt{b^2-a^2\sin^2 k}\right)dk  \label{eq:leftintegral}
\end{align}
and
\begin{equation}
\alpha_t^\to(n)= \frac{1}{2\pi} \int_{-\pi}^{\pi} \sum_j \xi_j(k) N_j(k) be^{ik} \lambda_j^t e^{-ikn}dk, \label{eq:rightintegral}
\end{equation}
Note that a discrete walk is being approximated by an integral. The Euler-Maclaurin summation formula\footnote{$\sum_{n=a}^b f(n) = \int_a^b f(x)dx + \frac{f(a)+f(b)}{2}+\sum_{k=1}^\infty \frac{B_{2k}}{(2k)!}(f^{(2k-1)}(b)-f^{(2k-1)}(a))$, where each $B_i$ is a Bernoulli number \cite{Apostol1999}.} gives the error term for these approximations.

Equations (\ref{eq:leftintegral}) and (\ref{eq:rightintegral}) can be solved by the steepest descent method from complex analysis, obtaining this way closed-form solutions. This is done in the next section.

\section{Asymptotic Approximation}\label{sec:approximation}
In this section it is shown how to find close-form solutions to the integrals (\ref{eq:leftintegral}) and (\ref{eq:rightintegral}). First, in Section \ref{sec:steepest-descent} the technique used in this research known as the steepest descent method is briefly explained. Then, in Section \ref{sec:asymptotic-approximation} the same technique is applied to the integral-forms of the walk (equations (\ref{eq:leftintegral}) and (\ref{eq:rightintegral})).

\subsection{Steepest Descent Method}\label{sec:steepest-descent}
Here one of the most powerful methods for asymptotic approximation of integrals is briefly explained. The method is known as Steepest Descent Approximation or Saddle Point Method. For a deeper understanding on this technique refer to \cite{wong01}.

The method of steepest descent is an asymptotic approximation method for certain types of exponential integrals of the form
\begin{equation}\label{eq:integral}
I_t=\int_C g(z)e^{tf(z)}dz
\end{equation}
where $C$ is a contour in the complex $z$-plane and $g(z)$ and $f(z)$ are complex-valued analytic functions. The parameter $t$ is taken to be real and positive, and we are interested in the asymptotic behavior of (\ref{eq:integral}) as $t\to\infty$ with $t>0$. Laplace's and stationary phase methods are just instances of this general procedure. The integral is dominated by the highest stationary points of $f$, i.e. if $f(z)=u(x,y)+iv(x,y)$ with $z=x+iy$ we expect the integral to be dominated by points where $u$ is maximum and $v$ is constant. The only possible extrema for $f$ are the {\it saddle points}, where $f'(z)=0$. Since $f$ is analytic, $u$ and $v$ satisfy the Cauchy-Riemann equation
\[
\frac{\partial^2u}{\partial x^2}+\frac{\partial^2u}{\partial y^2}=0,
\]
and from the maximum principle \cite{wong01} we have that if $\frac{\partial^2u}{\partial x^2}>0$ then $\frac{\partial^2u}{\partial y^2}<0$ or vice versa. If $z_0$ is the saddle point, then we can deform the contour to $C'$ (by Cauchy's theorem) so that it passes through $z_0$. From the Taylor expansion of $f(z)$ about $z_0$ we have
\[
f(z)\sim f(z_0)+\frac{1}{2}f''(z_0)(z-z_0)^2,
\]
where $\sim$ means ``is close up to additive error to". Then $g(z)\sim g(z_0)$, because for large $t$ the main contribution to the integral comes from $f$. Then $I_t$ becomes
\[
I_t\sim g(z_0)e^{tf(z_0)}\int_{C'}e^{\frac{1}{2}tf''(z_0)(z-z_0)^2}dz.
\]
Setting
\[
z-z_0=re^{i\phi} \quad\text{and}\quad f''(z_0)=\left|f''(z_0)\right|e^{i\theta}
\]
it can be seen that
\[
I_t\sim g(z_0)e^{tf(z_0)}\int_{C'}\exp({\frac{1}{2}t\left|f''(z_0)\right|e^{i\theta+2i\phi}r^2})e^{i\phi}dr.
\]

Note that $\phi$ is the angle of inclination of the oriented tangent to $C$ at point $z_0$, i.e.  $\phi=\arg(z_0)$ on $C$ \cite{wong01}. Choosing $\theta+2\phi=\pi$, i.e., $\phi=(\pi-\theta)/2$ then
\[
I_t\sim g(z_0)e^{tf(z_0)}e^{i\phi}\int_{C'}e^{-\frac{1}{2}t\left|f''(z_0)\right|r^2}dr
\]
and solving this as a Gaussian integral\footnote{The Gaussian integral or probability integral is given by $\int_{-\infty}^{\infty}e^{-x^2}dx=\sqrt{\pi}$.} yields
\begin{equation}\label{eq:expansion}
I_t= g(z_0)e^{tf(z_0)}e^{i\phi}\left(\frac{2\pi}{t\left|f''(z_0)\right|}\right)^{1/2}+O(t^{-1}).
\end{equation}

The deformation of the contour chosen to make the integration Gaussian corresponds to the steepest descent path from the saddle point, hence the name of the method \cite{wong01}. Taking this path is not essential, other methods like stationary point and Perron's method take another path with similar results \cite{wong01}.

\subsection{Asymptotic Approximation of the Walk on the Line}\label{sec:asymptotic-approximation}
\subsubsection{Left Amplitude}
First the integral-form corresponding to equation (\ref{eq:leftintegral}) is solved. First, put the integral in the form of equation (\ref{eq:integral}) by setting $n=\gamma t$ ($\gamma=n/t$) and writing
\begin{align}
\alpha_t^\gets(\gamma t)=\frac{1}{2\pi}\int_{-\pi}^{\pi} \sum_j g_j(k)e^{t f_j(k)}
\end{align}
where
\begin{align}
f_j(k)		&=\log{\lambda_j(k)}-ik\gamma,\label{eq:fj}\\
g_j(k)	&=N_j(k)\xi_j(k)\left(-ia\sin k\pm \sqrt{b^2-a^2\sin^2 k}\right).\label{eq:gj}
\end{align}

The saddle points $\theta_j$ of $f_j(k)$ are defined by the equation
\[
f_j'(\theta_j)=-i\gamma\mp\frac{a\sin \theta_j}{\sqrt{b^2-a^2\sin^2\theta_j}}=0.
\]
\redout{This equation has a solution at}
\begin{equation}\label{eq:theta}
\theta_j=\pm\arcsin \left(\frac{b\gamma}{a\sqrt{\gamma^2-1}}\right).
\end{equation}
\redout{Also note} that $|\lambda_j(\theta_j)|=1$. Moreover
\begin{align}\label{eq:f-theta}
f_j(\theta_j)=-i\gamma \theta_j+\log\left(\frac{\pm b+\sqrt{a^2(1-\gamma^2)+b^2\gamma^2}}{2\sqrt{1-\gamma^2}}\right)
\end{align}
and
\begin{align}\label{eq:f2-theta}
f_j''(\theta_j)=\frac{\pm(\gamma^2-1)\sqrt{b^2\gamma^2+a^2(1-\gamma^2)}}{b}.
\end{align}

\redout{Another solution to the equation $f'(\theta_j)=0$ is at $-\pi-\theta_j$ in the interval $[-\pi,\pi]$. However, since $f''(\theta_j)$ and $f''(-\pi-\theta_j)$ have similar behavior, the computations do not change.}

The contour is the real line in $[-\pi,\pi]$ and has no imaginary part, therefore $\phi=\arg \theta_j =0$ in equation (\ref{eq:expansion}).

Now using (\ref{eq:expansion}), the asymptotic expansion can be obtained
\begin{align*}
\alpha^\gets_t(\gamma t)		&= \frac{1}{2\pi} \sum_j g_j(\theta_j)e^{tf_j(\theta_j)} \left( \frac{2\pi}{t|f''_j(\theta_j)|} \right)^{1/2} + O(t^{-1}) \nonumber\\
			&=\frac{1}{2\pi} \sum_j N_j(\theta_j) \xi_j(\theta_j) \left[ \frac{\pm b(1-\gamma)}{\sqrt{1-\gamma^2}} \right] \nonumber\\
			&\quad\times\left( \frac{\pm b+\sqrt{a^2(1-\gamma^2)+b^2\gamma^2}}{2\sqrt{1-\gamma^2}} \right)^t e^{-i\gamma\theta_j t} \nonumber\\
			&\quad \times\left( \frac{2\pi |b|}{t |\gamma^2-1| \sqrt{b^2\gamma^2+a^2(1-\gamma^2)}} \right)^{1/2} + O(t^{-1}).
\end{align*}

\subsubsection{Right Amplitude}
Next is the solution of equation (\ref{eq:rightintegral}). Following the same steps as above, write the integral as
\begin{align}
\alpha_t^\to(\gamma t)=\frac{1}{2\pi}\int_{-\pi}^{\pi} \sum_j h_j(k)e^{t f_j(k)},
\end{align}
where $f_j$ is defined in the same way as in (\ref{eq:fj}), and
\begin{equation}
h_j(k)=N_j(k)\xi_j(k) b e^{ik}.\label{eq:hj}
\end{equation}

Reusing the previous calculations for $f_j$ (equations (\ref{eq:theta}), (\ref{eq:f-theta}) and (\ref{eq:f2-theta})), the asymptotic expansion is
\begin{align*}
\alpha_t^\to(\gamma t)	&= \frac{1}{2\pi} \sum_j h_j(\theta_j)e^{t f_j(\theta_j)} \left( \frac{2\pi}{t|f_j''(\theta_j)|} \right)^{1/2} + O(t^{-1})\nonumber\\
				&=\frac{1}{2\pi} \sum_j N_j(\theta_j)\xi_j(\theta_j) b e^{i\theta_j} \nonumber\\
				&\quad\times\left( \frac{\pm b+\sqrt{a^2(1-\gamma^2)+b^2\gamma^2}}{2\sqrt{1-\gamma^2}} \right)^t e^{-i\gamma\theta_j t}\nonumber\\
				&\quad \times\left( \frac{2\pi|b|}{t |\gamma^2-1|\sqrt{b^2\gamma^2+a^2(1-\gamma^2)}|} \right)^{1/2}+O(t^{-1})
\end{align*}

\section{Closed-form Formulas and Convergence}\label{sec:formulas-convergence}
\subsection{Formulas}
Approximate closed-forms for the amplitudes of the state of the walk on the line were given. Now the main contribution of this paper can be stated formally.
\begin{theorem}\label{the:amplitudes}
Let $\gamma=n/t$ and $a\equiv e^{i\pi\tau_1}+e^{i\pi\tau_2},b\equiv e^{i\pi\tau_1}-e^{i\pi\tau_2}$. If the state of the walk is
\[ \ket{\Psi_t}=\sum_n \ket{\psi_t(n)} \text{ with } \ket{\psi_t(n)}=
\begin{bmatrix}
\alpha^\gets_t(n)\\
\alpha_t^\to(n)
\end{bmatrix}
\]
then,
\begin{align*}
\alpha^\gets_t(\gamma t)	&\sim\frac{1}{2\pi} \sum_j N_j \xi_jA_j\left[ \frac{\pm b(1-\gamma)}{\sqrt{1-\gamma^2}} \right],\\
\alpha_t^\to(\gamma t)	&\sim\frac{1}{2\pi} \sum_j N_j\xi_j A_j b e^{i\theta_j},
\end{align*}
where the terms $A_j, N_j, \xi_j$ and $\theta_j$ are given by
\begin{align*}
A_j					&=\left( \frac{\pm b+\sqrt{a^2(1-\gamma^2)+b^2\gamma^2}}{2\sqrt{1-\gamma^2}} \right)^t\\
					&\quad\times\left( \frac{2\pi |b|}{t |\gamma^2-1| \sqrt{b^2\gamma^2+a^2(1-\gamma^2)}} \right)^{1/2}e^{-i\gamma\theta_j t},\\
N_j					&=\left( \left|-ia\sin \theta_j \pm \sqrt{b^2-a^2\sin^2 \theta_j}\right|^2+\left| b\right|^2 \right),\\
\xi_j					&=\alpha_0^\gets(0)\left(-ia\sin \theta_j \pm \sqrt{b^2-a^2\sin^2\theta_j}\right)^*\\
					&\quad+\alpha_0^\to(0)b^*e^{-i\theta_j},\\
\sin\theta_j			&=\pm \left(\frac{b\gamma}{a\sqrt{\gamma^2-1}}\right),
\end{align*}
with $\alpha_0^\gets(0)$ and $\alpha_0^\to(0)$ as the initial amplitudes of the walk for $n=0$, and $\alpha_0^\gets(n)=\alpha_0^\to(n)=0$ for $n \neq 0$.
\end{theorem}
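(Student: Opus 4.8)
The plan is to assemble the statement directly from the integral representations (\ref{eq:leftintegral}) and (\ref{eq:rightintegral}) by applying the steepest-descent expansion (\ref{eq:expansion}) term by term. First I would recall that, after the substitution $n=\gamma t$, each amplitude is a finite sum over $j\in\{1,2\}$ of integrals of the canonical form $\frac{1}{2\pi}\int_{-\pi}^{\pi} g_j(k)e^{tf_j(k)}\,dk$, with $f_j$ as in (\ref{eq:fj}) and with $g_j$ (resp.\ $h_j$) as in (\ref{eq:gj}) (resp.\ (\ref{eq:hj})). Since the object being approximated is a discrete sum over $n$ rather than a genuine integral, I would invoke the Euler--Maclaurin formula to record that replacing the sum by the integral introduces a controlled correction, so that the $\sim$ in the statement is justified once the integral itself is approximated.

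Next I would locate the saddle points. Differentiating (\ref{eq:fj}) gives $f_j'(k)=\lambda_j'(k)/\lambda_j(k)-i\gamma$, and inserting the eigenvalue from Lemma \ref{the:spectrum} reduces the condition $f_j'(\theta_j)=0$ to $-i\gamma\mp a\sin\theta_j/\sqrt{b^2-a^2\sin^2\theta_j}=0$, whose solution is (\ref{eq:theta}). I would then verify the two facts the later substitution needs: that $|\lambda_j(\theta_j)|=1$ at the saddle point, and the closed evaluations (\ref{eq:f-theta}) and (\ref{eq:f2-theta}) of $f_j(\theta_j)$ and $f_j''(\theta_j)$, obtained by inserting (\ref{eq:theta}) into the eigenvalue and into the second derivative. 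Because the contour is the real segment $[-\pi,\pi]$, its oriented tangent has $\phi=\arg\theta_j=0$, so the phase factor $e^{i\phi}$ in (\ref{eq:expansion}) is trivial.

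With these ingredients I would apply (\ref{eq:expansion}) to each term. The factor common to both amplitudes---namely $e^{tf_j(\theta_j)}$ bundled with the Gaussian prefactor $\left(2\pi/(t|f_j''(\theta_j)|)\right)^{1/2}$ and the oscillatory term $e^{-i\gamma\theta_j t}$---is precisely the quantity $A_j$ of the statement, obtained by inserting (\ref{eq:f-theta}) and (\ref{eq:f2-theta}). The coin-dependent factors $N_j$ and $\xi_j$ are read off from Lemma \ref{the:spectrum} evaluated at $\theta_j$, and the direction-specific factors come from $g_j(\theta_j)$ and $h_j(\theta_j)$: substituting $\sin\theta_j$ from (\ref{eq:theta}) simplifies $g_j(\theta_j)$ to the bracket $\left[\pm b(1-\gamma)/\sqrt{1-\gamma^2}\right]$ and leaves $h_j(\theta_j)$ proportional to $be^{i\theta_j}$. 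Summing over $j$ together with the overall prefactor $1/(2\pi)$ then yields the two displayed formulas.

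I expect the main obstacle to be the algebra at the saddle point: verifying the closed forms (\ref{eq:f-theta}) and (\ref{eq:f2-theta}), and above all reducing $g_j(\theta_j)$ to the clean bracket, requires substituting (\ref{eq:theta}) into nested radicals and cancelling carefully, while the $\pm$ branches and the second saddle point at $-\pi-\theta_j$ must be tracked consistently so that their contributions combine without altering the form. A secondary subtlety is making the two error sources---the Euler--Maclaurin correction and the $O(t^{-1})$ remainder in (\ref{eq:expansion})---explicit enough to support the asymptotic equality, a point the paper addresses separately in Section \ref{sec:formulas-convergence}.
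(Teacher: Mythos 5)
Your proposal is correct and follows essentially the same route as the paper: the theorem is exactly the assembly of the integral representations (\ref{eq:leftintegral})--(\ref{eq:rightintegral}) via the steepest-descent expansion (\ref{eq:expansion}), using the saddle point (\ref{eq:theta}), the evaluations (\ref{eq:f-theta})--(\ref{eq:f2-theta}), the observation that $\phi=\arg\theta_j=0$ on the real contour, and the Euler--Maclaurin correction for the sum-to-integral step, with $A_j$, $N_j$, $\xi_j$ and the direction-specific factors $g_j(\theta_j)$, $h_j(\theta_j)$ identified exactly as you describe. The points you flag as delicate (the saddle-point algebra, the $\pm$ branches, the second saddle at $-\pi-\theta_j$, and the two error sources) are the same ones the paper handles, respectively by direct substitution and by the error analysis in Section \ref{sec:formulas-convergence}.
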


\begin{figure}[t]
	\centering
	\includegraphics[scale=1.2]{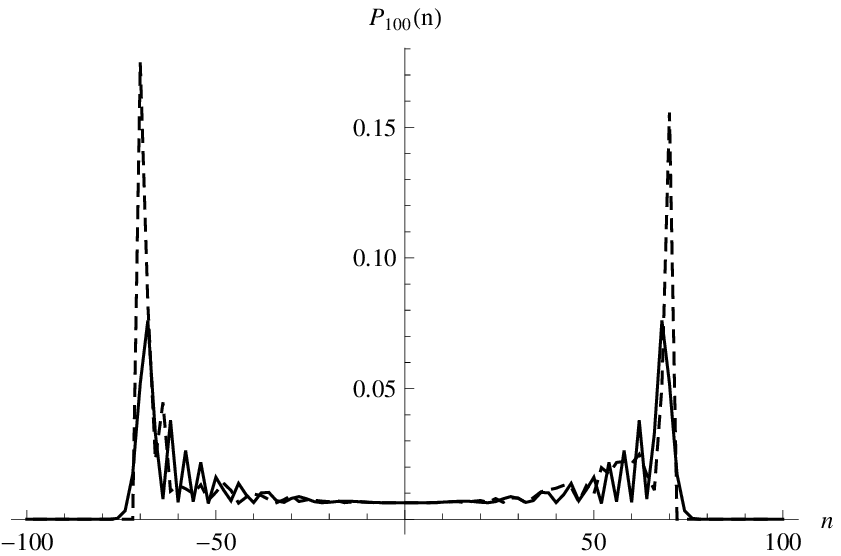}
	\caption{Comparison between the probability distributions of numerical simulation (dark) and Theorem \ref{the:amplitudes} (dashed) with $\tau_1=1/2$ and $\tau_2=0$, $t=100$, and initial state in equal superposition of directions.}
	\label{fig:comparison1}
\end{figure}
\begin{figure}[t]
	\centering
	\includegraphics[scale=1.2]{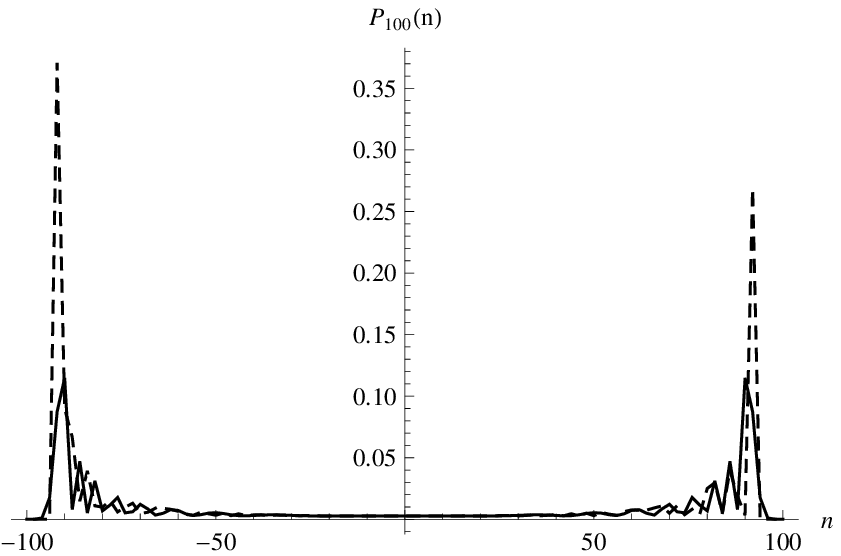}
	\caption{Comparison between the probability distributions of numerical simulation (dark) and Theorem \ref{the:amplitudes} (dashed) with $\tau_1=3/4$ and $\tau_2=1/2$, $t=100$, and initial state in equal superposition of directions.}
	\label{fig:comparison2}
\end{figure}

In a seminal work, Konno \cite{Konno2002,Konno2005} gave explicit expressions for the amplitudes of a $U(2)$ coin using a discrete path integral method. However, these expressions were not in closed-form, as it is claimed in this work.


In order to assess the quality of the approximation, figures \ref{fig:comparison1} and \ref{fig:comparison2} show a comparison between the probability distributions given by Theorem \ref{the:amplitudes}, and a numerical simulation of walks that start with an equal superposition of directions for different values of the parameters. It can be seen that the approximation gives some errors, but the asymptotic agrees with the simulation. The figures show that Theorem \ref{the:amplitudes} is close to the real values of the probability distribution, in particular in the middle part of the plots.

The errors in the approximation made by Theorem \ref{the:amplitudes} can be computed from two parts, the Euler-Maclaurin formula and the steepest descent method \cite{wong01}. Denote these errors by $\epsilon$ and $\varepsilon$ respectively. Let $B_i=\sum_{r=0}^i\binom{i}{r}B_{i-r}$ be a Bernoulli number \cite{Apostol1999}, and let $d\in\{\gets,\to\}$. Then, the error for $\alpha_t^d(\gamma t)$ is $\sum_j \epsilon_{j,d}+\varepsilon_{j,d}$, where
\begin{equation}
\epsilon_{j,d}=\sum_{m=1}^\infty \frac{B_{2m}}{(2m)!} \left(\frac{\partial^{2m-1}}{\partial k^{2m-1}}\widetilde{\alpha}_t^{d}(\pi)-\frac{\partial^{2m-1}}{\partial k^{2m-1}}\widetilde{\alpha}_t^{d}(-\pi) \right)
\end{equation}
and
\begin{align}
\varepsilon_{j,d}		&=\frac{1}{2\pi} \sum_j e^{t f_j(\theta_j)} \left( \frac{2\pi}{t|f_j''(\theta_j)|} \right)^{1/2}\nonumber\\					&\quad\times\left(\sum_{m=1}^\infty  \frac{(-1)^m}{m!} \left( \frac{1}{2t|f_j''(\theta_j)|} \right)^m \frac{\partial^{2m}}{\partial k^{2m}} \rho_j(\theta_j) \right),
\end{align}
where $\rho_j$ is either equation (\ref{eq:gj}) if $d=\gets$, or (\ref{eq:hj}) if $d=\to$. It can be seen that if we take $m$ terms from each summation, $\epsilon_{j,d}=O(2^{-m})$ and $\varepsilon_{j,d}=O(t^{-m})$.

\subsection{Convergence and Properties}\label{sec:convergence}
For quantum walks on the line and $n$-dimensional grids there exists weak convergence theorems  \cite{grimmett04}. In this section, we state the weak convergence of quantum walks on the line with phase parameters using these previous results. Then we show some applications of the convergence to compute the support of the probability density function.
\begin{theorem}\label{the:convergence}
Let $\Omega=[-\pi,\pi] \times\{1,2\}$ be a probability space with probability measure  $\mu=|\langle \widetilde{\psi}_0(k)|\lambda_j(k)\rangle |^2dk/2\pi$ for $k\in[-\pi,\pi]$ and $j=1,2$. Define a map $h:\Omega\to \mathbb{R}$ such that for $(k,j)\in \Omega$
\[ h(k,j)\equiv h_j(k)=(-1)^j \frac{\sin k}{\sqrt{\sin^2 k+\tan^2 \frac{\pi}{2}(\tau_1-\tau_2)}}.\]
Let $X_t$ be a position of the quantum walk at time $t$ with distribution given by (\ref{eq:prob}), and $Z$ be a random variable of $\Omega$ with distribution $\mu$. Then we have as $t\to\infty$
\[ \frac{X_t}{t} \Rightarrow h(Z),\]
where $\Rightarrow$ denotes weak convergence\footnote{A sequence of random variables $\{X_i:i\geq 1\}$ converges weakly to a random variable $Z$ if $\lim_{n\to\infty} X_n=Z$, given that $\lim_{n\to\infty} E[h(X_n)]=E[h(Z)]$ for all bound continuous functions $h:\mathbb{R}\to\mathbb{R}$.}.
\end{theorem}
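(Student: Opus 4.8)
The plan is to establish the limit by the method of moments, following the Fourier-analytic approach of Grimmett, Janson, and Scudo \cite{grimmett04}. Because $|h_j(k)|\le 1$ for every $(k,j)$ (the square root in the denominator dominates $|\sin k|$), the limiting variable $h(Z)$ is bounded and its law is therefore uniquely determined by its moments. It then suffices to prove that each moment $E[(X_t/t)^r]$ converges to $\int_\Omega h^r\,d\mu$ as $t\to\infty$, and to invoke the Fr\'echet--Shohat theorem to promote convergence of moments to weak convergence.

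First I would rewrite the moments in Fourier space. Since $E[X_t^r]=\sum_n n^r\braket{\psi_t(n)}{\psi_t(n)}$ and multiplication by $n$ on the position side corresponds, under the transform (\ref{eq:fourier}), to the operator $-i\,\partial_k$ acting on $\ket{\widetilde{\psi}_t(k)}$, Plancherel's identity for Fourier series gives
\begin{equation*}
E[X_t^r]=\frac{1}{2\pi}\int_{-\pi}^{\pi}\bra{\widetilde{\psi}_t(k)}(-i\partial_k)^r\ket{\widetilde{\psi}_t(k)}\,dk.
\end{equation*}
Substituting the spectral form (\ref{eq:fourier-evolution}) and writing $\lambda_j(k)=e^{i\omega_j(k)}$ (legitimate because $M_k$ is unitary, so $|\lambda_j(k)|=1$), each $-i\partial_k$ that hits $\lambda_j(k)^t=e^{it\omega_j(k)}$ produces a factor $t\,\omega_j'(k)$, whereas differentiating the slowly varying factors $\xi_j(k)\ket{\lambda_j(k)}$ only produces terms of order $t^{r-1}$. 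Hence
\begin{equation*}
(-i\partial_k)^r\ket{\widetilde{\psi}_t(k)}=\sum_j (t\,\omega_j'(k))^r\,\lambda_j(k)^t\,\xi_j(k)\ket{\lambda_j(k)}+O(t^{r-1}).
\end{equation*}
In the resulting sesquilinear form the off-diagonal terms $j\neq j'$ carry an oscillatory factor $e^{it(\omega_j-\omega_{j'})}$ and drop out by the Riemann--Lebesgue lemma, while the diagonal terms use $\braket{\lambda_j(k)}{\lambda_j(k)}=1$. Dividing by $t^r$ leaves
\begin{equation*}
E[(X_t/t)^r]\longrightarrow\frac{1}{2\pi}\int_{-\pi}^{\pi}\sum_j (\omega_j'(k))^r\,|\xi_j(k)|^2\,dk=\int_{\Omega}h(k,j)^r\,d\mu.
\end{equation*}

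Two identifications close the computation. That $|\xi_j(k)|^2\,dk/2\pi$ is exactly the measure $\mu$ is immediate from $\xi_j(k)=\braket{\lambda_j(k)}{\widetilde{\psi}_0(k)}$ in (\ref{eq:time-evolution}); and that the group velocity $\omega_j'(k)$ equals $h_j(k)$ follows from an explicit computation of the eigenvalue phase. Setting $\Delta=\tfrac{\pi}{2}(\tau_1-\tau_2)$ and $\Sigma=\tfrac{\pi}{2}(\tau_1+\tau_2)$ one has $a=2\cos\Delta\,e^{i\Sigma}$ and $b=2i\sin\Delta\,e^{i\Sigma}$, so Lemma \ref{the:spectrum} simplifies to $\lambda_j(k)=e^{i\Sigma}(\cos\Delta\cos k\pm i\sqrt{\sin^2\Delta+\cos^2\Delta\sin^2 k})$. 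Thus $\omega_j(k)=\Sigma+\phi_j(k)$ with $\cos\phi_j(k)=\cos\Delta\cos k$, and implicit differentiation (the constant $\Sigma$ drops out) gives $\omega_j'(k)=\pm\sin k/\sqrt{\sin^2 k+\tan^2\Delta}$, which is exactly $h_j(k)=(-1)^j\sin k/\sqrt{\sin^2 k+\tan^2\tfrac{\pi}{2}(\tau_1-\tau_2)}$ once the branch $j$ is matched to the sign.

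The hard part will be turning the heuristic error control of the second step into a rigorous estimate: one must show, uniformly in $k$, that the $O(t^{r-1})$ contributions from differentiating $\xi_j\ket{\lambda_j}$ and the oscillatory cross terms are genuinely negligible after integration, which requires $\xi_j(k)$ and $\ket{\lambda_j(k)}$ to be smooth enough for the Riemann--Lebesgue bound and hence requires the two eigenvalues to stay separated. For this coin the situation is benign, since the discriminant $b^2-a^2\sin^2 k=-4e^{2i\Sigma}(\sin^2\Delta+\cos^2\Delta\sin^2 k)$ never vanishes when $0<|\tau_1-\tau_2|<1$, so the eigenprojections are analytic in $k$ and no degeneracy neighbourhoods need be excised; the boundary values $\tau_1=\tau_2$ (no mixing) and $|\tau_1-\tau_2|=1$ are degenerate and would be handled separately or recovered by continuity.
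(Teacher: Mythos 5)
Your proposal is correct, and it contains the paper's proof as a proper subset: the paper's entire argument is a citation of Theorem 1 of Grimmett--Janson--Scudo as a black box, followed by the computation of the group velocity $-i\lambda_j'(k)/\lambda_j(k)=(-1)^{j+1}\,ai\sin k/\sqrt{b^2-a^2\sin^2 k}$, simplified via the observation $b/a=e^{i\pi/2}\tan\frac{\pi}{2}(\tau_1-\tau_2)$. You differ in two ways. First, you unpack the black box: your moment computation (Plancherel, the correspondence $n\leftrightarrow -i\partial_k$, extraction of the diagonal terms, Fr\'echet--Shohat for bounded variables) is essentially a sketch of the GJS proof specialized to this walk, which buys self-containedness at the cost of the error control you rightly flag as the hard part --- and which GJS have already carried out under the eigenvalue-separation hypothesis that your discriminant computation $b^2-a^2\sin^2 k=-4e^{2i\Sigma}(\sin^2\Delta+\cos^2\Delta\sin^2 k)$ verifies. (One small simplification available to you: at leading order the cross terms $j\neq j'$ vanish pointwise by orthogonality of eigenvectors of the unitary $M_k$, so Riemann--Lebesgue is only needed for the $O(t^{r-1})$ remainders.) Second, your identification of $h_j$ goes through the parametrization $a=2\cos\Delta\,e^{i\Sigma}$, $b=2i\sin\Delta\,e^{i\Sigma}$ and implicit differentiation of $\cos\phi_j=\cos\Delta\cos k$, rather than the paper's direct logarithmic derivative; the two computations agree, since $\pm\cos\Delta\sin k/\sqrt{\sin^2\Delta+\cos^2\Delta\sin^2 k}=\pm\sin k/\sqrt{\sin^2 k+\tan^2\Delta}$, and your version has the advantage of making the unit modulus of $\lambda_j(k)$ and the degenerate boundary cases $\tau_1=\tau_2$ and $|\tau_1-\tau_2|=1$ transparent, where the paper hides these points behind ``after some algebra.''
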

\begin{proof}
Consider the theorem that states the weak convergence of quantum walks on the line \cite[theorem 1]{grimmett04}. Let $\lambda_j(k)$ be as in Lemma \ref{the:spectrum}. Then
\[
\lambda_j'(k)=\frac{-a\sin k}{2}-\frac{a^2\cos k \sin k}{2\sqrt{b^2-a^2\sin^2 k}}.
\]
Dividing this by $\lambda_j(k)$ we obtain
\[
\frac{-i\lambda_j'(k)}{\lambda_j(k)}=(-1)^{j+1} \frac{a i\sin k}{\sqrt{b^2-a^2 \sin^2(k)}}.
\]
Then, after some algebra and observing that $\frac{b}{a}=e^{i\pi/2}\tan \frac{\pi}{2}(\tau_1-\tau_2)$, the theorem follows.
\end{proof}
As an application of Theorem \ref{the:convergence}, we can calculate the position of the two peaks of the walk for large time.
\begin{corollary}\label{cor:interval}
The limit distribution of $X_t/t$ is concentrated on the interval $\left[ -  \frac{|a|}{2}  ,    \frac{|a|}{2}   \right]$.
\end{corollary}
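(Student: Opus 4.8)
The plan is to read the support of the limit distribution directly off Theorem~\ref{the:convergence}. Since $X_t/t \Rightarrow h(Z)$, the limiting law is the distribution of $h(Z)$, which lives on the closure of the image $h(\Omega)$. Hence the assertion that it is concentrated on $[-|a|/2,\,|a|/2]$ amounts to the bound $|h_j(k)| \le |a|/2$ for every $(k,j)\in\Omega$, together with the fact that the endpoints are attained, so that they mark the extent of the distribution (its two peaks). First I would therefore recast the claim as an elementary optimization of the map $h$ over $k\in[-\pi,\pi]$ and $j\in\{1,2\}$.

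Because $h_j(k) = (-1)^j \sin k / \sqrt{\sin^2 k + \tan^2 \tfrac{\pi}{2}(\tau_1-\tau_2)}$, the two branches $j=1,2$ produce $\pm$ of the same quantity, so it suffices to maximize $g(k) = |\sin k|/\sqrt{\sin^2 k + c}$ with $c = \tan^2\tfrac{\pi}{2}(\tau_1-\tau_2)\ge 0$. Substituting $s=\sin k\in[-1,1]$ reduces this to maximizing $s/\sqrt{s^2+c}$ on $[-1,1]$, whose derivative $c/(s^2+c)^{3/2}$ is nonnegative; the map is therefore monotone, and its extreme value $1/\sqrt{1+c}$ is attained at $s=\pm1$, i.e. at $k=\pm\pi/2$. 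By continuity (intermediate value theorem) each branch sweeps out the whole interval, so $h(\Omega)=[-1/\sqrt{1+c},\,1/\sqrt{1+c}]$.

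It remains to identify this bound with $|a|/2$. Using $1+\tan^2\theta = \sec^2\theta$ gives $1/\sqrt{1+c} = |\cos\tfrac{\pi}{2}(\tau_1-\tau_2)|$. Independently, expanding $|a|^2 = |e^{i\pi\tau_1}+e^{i\pi\tau_2}|^2 = 2 + 2\cos\pi(\tau_1-\tau_2) = 4\cos^2\tfrac{\pi}{2}(\tau_1-\tau_2)$ yields $|a|/2 = |\cos\tfrac{\pi}{2}(\tau_1-\tau_2)|$. Matching the two expressions gives $\max_k g(k) = |a|/2$, so the image of $h$, and hence the support of the limit distribution, is contained in $[-|a|/2,\,|a|/2]$ with the endpoints attained, which is the claim.

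Most of this is routine: the monotonicity of $s/\sqrt{s^2+c}$ and the two half-angle identities are immediate. The one step that requires genuine care is the reconciliation $|a| = 2|\cos\tfrac{\pi}{2}(\tau_1-\tau_2)|$, since the corollary is stated in terms of $a$ whereas the natural variable emerging from $h$ is the phase difference $\tau_1-\tau_2$; recognizing that these coincide is precisely what ties the support to the coin parameters. A secondary subtlety, needed only if one wants the support to \emph{equal} rather than merely lie in the interval, is that the pushforward of $\mu$ should charge every subinterval; this holds whenever the density $|\langle\widetilde\psi_0(k)|\lambda_j(k)\rangle|^2$ is positive off a measure-zero set, which is generic in the initial amplitudes and fails only in the degenerate case $\tau_1=\tau_2$, where the coin is trivial and the walk purely ballistic.
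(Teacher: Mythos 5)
Your proof is correct and takes essentially the same approach as the paper, whose entire proof is the one-line observation that $h$ attains its maximum and minimum at $k=\pm\pi/2$, from which the corollary follows. Your write-up simply supplies the details the paper leaves implicit: the monotonicity of $s\mapsto s/\sqrt{s^2+c}$ forcing the extrema to $k=\pm\pi/2$, and the identity $|a|=2\left|\cos\tfrac{\pi}{2}(\tau_1-\tau_2)\right|$ that converts the value of $h$ there into the stated bound $|a|/2$.
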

\begin{proof}
Theorem \ref{the:convergence} have its maximum and minimum values for $k=\pm\pi/2$ and the corollary follows.
\end{proof}

The maximum probability of $P_t(n)$ is found at the top of these two peaks, i.e., where \redout{$n=\pm   |a|/2$} \cite{grimmett04}. Considering $|n/t|$ as the speed of the peaks, it can be seen that by setting $\tau_1=\tau_2$ it gets its maximum value, i.e. the fastest spreading of the walk. This corresponds exactly to an identity operator, and the walk does not mix at all inside the range of corollary \ref{cor:interval}. In order to get high speed and maximum randomness (i.e. the best mixing for positions inside the range) for $P_t(n)$, we can set any value such that  $|\tau_1-\tau_2|=1/2$. This implies that the support of $h$ is in $[-1/\sqrt{2},1/\sqrt{2}]$. In this case, the operator simulates exactly the probability distribution of a Hadamard operator \cite{grimmett04}.

As another application of Theorem \ref{the:convergence}, we can compute the density function of the random variable $Y=X_t/t$ in the asymptotic limit when $t\to \infty$. Following the steps of \cite{grimmett04} for the Hadamard coin, we differentiate the quantity
\begin{equation}
P(Y \leq y)=\sum_j \int_{k\in[-\pi,\pi]:h_j(k)\leq y} \left| \braket{\widetilde{\psi}_0(k)}{\lambda_j(k)} \right|^2 \frac{dk}{2\pi},
\end{equation}
\redout{which yields the density function
\begin{equation}
f(y)=\frac{|b|/2}{\pi(y^2-1)\sqrt{(|a|/2)^2-y^2}}
\end{equation}
for $y\in(-|a|/2,|a|/2)$, under the assumption of $Im(\alpha_0^\gets\cdot \alpha_0^{\to *})\sin(\tau_1-\tau_2)\pi=0$ and $|\alpha_0^\gets|=|\alpha_0^\to|=1/\sqrt{2}$, which agrees with \cite{Konno2002,Konno2005}.}

\section{Conclusions}
This paper presented a study of discrete-time quantum walks on the line. A symmetric $SU(2)$ coin operation was proposed and analyzed as a step towards an understanding of quantum walks. Using Fourier analysis and asymptotic approximation methods, we computed a closed-form formula for the amplitudes of the state of the walk. With this formula, we have a direct way to compute the amplitudes at any time step without recurring to time-consuming simulations or numerical integration. This also give us a complete characterization of the induced probability distribution of general quantum walks on the line.

One important question that remains unanswered is the relation between Theorems \ref{the:amplitudes} and \ref{the:convergence}. Theorem \ref{the:amplitudes} is based on the computation of saddle points of the high oscillatory kernel of Fourier coefficients. On the other hand, Theorem \ref{the:convergence} is based on the method of moments (see \cite{grimmett04} for details). A relation between these two density functions could set a common ground for the analysis of coined quantum walks.

\section*{Aknowledgments}
The authors would like to thank Hiroyuki Seki for reviewing an initial draft. The authors also thank the anonymous referees for the suggestions that helped improve this paper.

\bibliographystyle{plain}
\bibliography{library}

\end{document}